  \colorlet{red0}{black!75!red!100!}
  \colorlet{red1}{black!50!red!100!}
  \colorlet{red2}{black!25!red!100!}
  \colorlet{red3}{red}
  \colorlet{red4}{red!50!}
  \colorlet{red5}{red!25!}
  \colorlet{red6}{red!10!}
  \colorlet{red7}{red!5!}
  \newcommand{\NN}{\mathbb N}
  \newcommand{\ZZ}{\mathbb Z}
  \newcommand{\RR}{\mathbb R}
  \newcommand{\DD}{\mathbb D}
  \newcommand{\B}{\mathcal B}
  \newcommand{\mto}{\rightrightarrows}
  \newcommand{\length}[1]{\left|#1\right|}
  \newcommand{\flength}[1]{\left|#1\right|}
  \newcommand{\sdzero}{\textup{\texttt{0}}}
  \newcommand{\sdone}{\textup{\texttt{1}}}
  \newcommand{\albe}{\Sigma}
  \newcommand{\bigo}{\mathcal O}
  \newcommand{\demph}[1]{\textbf{#1}}
  \newcommand{\eval}{\mathrm{eval}}
  \newcommand{\dom}{\mathrm{dom}}
  \newcommand{\xic}{\ensuremath{\xi_C}}
  \newcommand{\str}{\mathbf}
  \newcommand{\reg}{\albe^{**}}
  \newcommand{\abs}[1]{\left|#1\right|}
  \newcommand{\iRRAM}{\texttt{iRRAM}\xspace}
\title{A minimal representation for continuous functions}
\author{Franz Brauße\footnote{Universität Trier, 54286 Trier, room H~420; Email: brausse@informatik.uni-trier.de; supported by the German Research Foundation (DFG), project WERA, grant MU 1801/5-1}\ \ and Florian Steinberg\footnote{INRIA, Sophia-Antipolis; Email: florian.steinberg@inria.fr; Supported by the ANR project FastRelax(ANR-14-CE25-0018-01) of the French National Agency for Research}}
\date{}
\begin{document}

	\maketitle
	\abstract{
    Kawamura and Cook have specified the least set of information about a continuous function on the unit interval which is needed for fast function evaluation.
    This paper presents a variation of their result.
    To make the above statement precise, one has to specify what a \lq set of information\rq\ is and what \lq fast\rq\ should mean.
    Kawamura and Cook use polynomial-time computability in the sense of second-order complexity theory to define what \lq fast\rq\ means but do not use the most general \lq sets of information\rq\ this framework is able to handle.
    Instead they require codes to be length-monotone.
    This paper removes the additional premise of length-monotonicity, and instead imposes further conditions on the speed of the evaluation:
    The operation should now be computable in \lq hyper-linear\rq\ time.
    This means that the running time can not contain any iterations of the length function and, while an arbitrary polynomial may be applied to its value, on the argument side at most a shift by a constant is allowed.
    This is a very restrictive notion, but one can check that the Kawamura and Cook representation allows for hyper-linear time evaluation.
    The paper proves that it is not minimal with this property by providing the minimal set of information necessary for hyper-linear evaluation and proving that it is not polynomial-time equivalent to any encoding using only length-monotone names.
    This is ultimatively due to a failure of polynomial-time computability of an upper bound to a modulus of continuity.
  	Indeed this failure seems to reflect the behaviour of software based on the ideas of computable analysis appropriately and was one of the reasons for a closer investigation in the first place.
	}
	\tableofcontents
	\newpage
  \section{Introduction}
    This paper discusses subjects that are from the field of real complexity theory; The resource sensitive refinement of computable analysis.
    The goals of computable analysis and real complexity theory are to broaden the scope of classical computability and complexity theory from discrete structures to continuous structures.
    Computable analysis originates from one of the papers that is considered foundational for computability theory itself \cite{turing1936computable}.
    It branched of as a separate discipline in the 50's \cite{MR0086756} and has been extended steadily since.
    Nowadays, most researchers in computable analysis use Weihrauch's framework of representations \cite{MR1795407}.

    The complexity theory behind computable analysis was initiated by Friedman and Ko \cite{MR1137517} and has recently seen a lot of new developments due to advancements in the field of second-order complexity theory.
    Kawamura and Cook introduced a framework for complexity for operators in analysis \cite{MR2743298} and kicked off a line of investigations in the past years \cite[and many more]{kawamuraphd,MR3259646,MR3219039,MR3239272,MR3377508,postive,lmcs:3924}.
    One of the results that contributed to the popularity and acceptance of their framework is the following:
    Kawamura and Cook succeeded to provide a standard representation of the set of continuous functions on the unit interval.
    They proved that this representation contains the minimal information needed to make the evaluation operator polynomial-time computable.
    Where minimality is taken to mean that any other representation with this property can be translated to the standard representation in polynomial time.
    This paper provides a variation of Kawamura and Cook's result.

    The framework of Kawamura and Cook sits behind most complexity theoretical results in computable analysis.
    Still, there remains a gaps between the theory and applications:
    For a well-behaved complexity theory, Kawamura and Cook impose some additional assumptions on the representations they consider.
    In practice, these assumptions seem unnatural as they lead to extensive padding.
    Furthermore, some of the theoretical predictions seem to be out of sync with the behavior of efficient software based on the ideas from computable analysis:
    \iRRAM is a framework for and implementation of error-free real arithmetic based on the ideas of real complexity theory \cite{Mu00,iRRAM}.
    In \iRRAM it is possible to implement functions and, as long as the implementation of the function is reasonable, evaluation of the function is fast.
    Computing an upper bound of the modulus of continuity of a function, on the other hand, does not seem to be possible in a reasonable amount of time.
    In contrast to that, within Kawamura and Cook's framework one can prove that polynomial-time computability of evaluation implies polynomial-time computability of a modulus.

    Due to the additional assumptions Kawamura and Cook impose, namely length monotonicity of names, Kawamura and Cook only employ a fragment of second-order complexity theory.
    This paper asks the question whether the discrepancies between theory and practice in the specific application of representations of continuous functions on the unit interval can be removed by omitting length monotonicity.
    It should be pointed out ahead of time that while the approach seems to lead to a success in the beginning, we only consider it to be partially successful.
    Technical difficulties are encountered when composing functions.

    This paper provides a representation $\xic$ (\Cref{def:xic}) such that a function can be evaluated quickly by using an algorithm for evaluation that is very similar to how \iRRAM works internally (\Cref{res:hyper-linear time evaluation}).
    It is proven that it is impossible to compute an upper bound to the modulus of continuity of a function in polynomial-time with respect to $\xic$ (\Cref{resu:the modulus function}) and this is used to compare $\xic$ to Kawamura and Cook's minimal representation.
    While translatability in one direction follows from the minimality result proven by Kawamura and Cook, the representations are not polynomial-time equivalent (\Cref{resu:comparing representations}).
    It follows directly, that $\xic$ is not polynomial-time equivalent to any second-order representation (\Cref{resu:no second-order representation}).
		Many of the more basic operations, like the arithmetical operations, are polynomial-time computable with respect to the representation $\xic$.
		However, in contrast to Kawamura and Cook's representation, $\xic$ does not allow to extract an upper bound to the modulus of continuity in polynomial time.
		Furthermore, the final part of the paper proves composition of funcitons fails to be polynomial-time computable with respect to $\xic$ (\Cref{resu:composition}).

    The paper also proves that for any other representation such that evaluation is fast, there is a fast translation to $\xic$ (\Cref{resu:minimality of xic}).
    Here, the condition for being \lq fast\rq\ (\Cref{def:hyper-linear}) is more restrictive than polynomial-time computable and is given the name hyper-linear time computability.
    This notion leads to some technical difficulties.
    The use of a different notion of being \lq fast\rq\ is necessary for the proofs, but can also be justified by other means:
    In the past of real complexity theory there has been a lot of discussion about whether or not iteration of the length function in the running time should be considered feasible.
    Thus, one of the restrictions we use, namely forbidding iterations of the size function, is justifiable.
    The restriction, however, goes further to only allow a constant instead of the more usual polynomial lookahead.
    This seems to be a real restriction, and is only done since it seems unavoidable for the proofs.
    It should be noted that already the restriction to one iteration of the length function leads to a complexity class that is dependent on small changes in the model of computation.
    In the model that we pick, a consequence of this is that the class of operators that are considered \lq fast\rq\ is not closed under composition.

  \subsection{Notations}
    Fix the finite alphabet $\Sigma:=\{\sdzero,\sdone,\#\}$.
    Denote the set of finite words over $\Sigma$ by $\Sigma^*$.
    The empty string is denoted by $\varepsilon$.

    For convenience of notation, this paper considers some sets from mathematics as subsets of $\Sigma^*$:
    Let $\NN$ denote the set $\{\sdone,\sdone\sdzero,\sdone\sdone,\sdone\sdzero\sdzero,\sdone\sdzero\sdone,\ldots\}$ of \demph{positive integers in binary notation}.
    Let $\omega=\{\varepsilon,\sdone,\sdone\sdone,\ldots\}$ denote the \demph{non-negative integers in unary notation}.
    To avoid notational confusion this paper uses $2^n$ instead of $n$ if an integer in unary notation is handed to a machine.
    The length function $\length\cdot\colon \Sigma^*\to \omega$ assigns to a string $\str a$ its number of bits.
    Since $\omega$ are the integers in unary, this operation can also be regarded to replace all digits of the string by $\sdone$.
    Let $\ZZ$ denote the set $\sdzero\sdzero\NN\cup \sdzero\sdone\NN\cup\{\sdzero\sdzero\}$, where $\sdzero\sdzero$ is interpreted as $0$, $\sdzero\sdzero n$ is interpreted as $n$ and $\sdzero\sdone n$ is interpreted as $-n$.
    Finally, interpret a string $\str c$ that has a single $\#$ and starts in either $\sdzero\sdone$, $\sdone\sdone$ or $\sdzero\sdzero\#$ as the binary expansion of a rational number.
    I.e.\ identify $\str c$ with the rational number $(-1)^{c_0}(\sum_{i=1}^{m-1} c_i2^{i-(m-1)}+\sum_{i=m+1}^{\length{\str c}-1}c_i2^{i-m})$, where $m$ is the position of the $\#$.
    The set of numbers that have a code as above is called \demph{dyadic} numbers and denoted by $\DD$.
    Note that this does not provide $\DD\subseteq \Sigma^*$ but only defines partial a surjective mapping from $\Sigma^*$ to $\DD$, a so-called notation.
    Furthermore it holds that for any $n$ the $m+n$ initial segment of a dyadic number is again a dyadic number (where $m$ is the position of $\#$) and a $2^{-n}$-approximation to the original number.
    The above sets $\NN,\ZZ,\DD\subseteq \Sigma^*$ are pair-wise disjoint.

    The \demph{Baire space} $\B$ is the space of all string functions $\varphi:\Sigma^*\to\Sigma^*$.
    The reader is assumed to be familiar with the definitions of computability and complexity of string functions.
    The above can be used to talk about computability and complexity of functions between natural and dyadic numbers.
    Note that all string functions are required to be total, however, usually only the values of the functions on natural or rational inputs are required to fulfill some conditions.
    As a consequence it is possible to consider multivariate functions by just separating the arguments with $\#\#$.
    This paper uses the following pairing function on string functions:
    \[ \langle \varphi,\psi\rangle(\str a) := \begin{cases} \varphi(\str b) &\text{if }\str a= \sdzero\str b \\\psi(\str b) &\text{if }\str a =\sdone \str b \\ \varepsilon &\text{otherwise.}\end{cases} \]

    Throughout this paper $C([0,1])$ denotes the set of \demph{continuous real valued functions on the unit interval}.
    The following short notation for intervals is used:
    \[ [x\pm\epsilon]:= [x-\epsilon,x+\epsilon]. \]

  \subsection{Representations}

    Computability theory encodes discrete structures by strings.
    Since the set of all strings $\Sigma^*$ is countable, this can only work for countable structures.
    To compute on structures of continuum cardinality one has to encode the elements by string functions instead of strings.
    \begin{definition}\label{def:representation}
      A \demph{representation} $\xi$ of a space $X$ is a partial surjective mapping $\xi:\subseteq\B\to X$ from the Baire space to $X$.
    \end{definition}
    An element of $\xi^{-1}(x)$ is called a \demph{$\xi$-name} or simply a \demph{name} of $x$.
    An element of a space with a distinguished representation is called \demph{computable} resp.\ \demph{polynomial-time computable} if it has a name which is computable resp.\ polynomial-time computable.
    \begin{example}\label{ex:representation of reals}
      Throughout this paper, the real numbers are equipped with the following representation:
      A string function $\varphi$ is a name of $x\in\RR$ if and only if it holds for all $n\in\omega$ that
      \[ \varphi(2^n)\in\DD \text{ and }\abs{\varphi(2^n)-x}\leq 2^{-n}. \]
      That is: a name of a real number encodes dyadic approximations of arbitrary precision.
      This paper adopts the convention to encode precision requirements as integers in unary, which is standard in the field of real complexity theory.
      One could have equivalently used an integer in binary as input and replaced the right hand side by $\frac1{n+1}$ or a strictly positive rational $\epsilon$ that would then appear on the right hand side.
    \end{example}

    \begin{definition}\label{def:realizer}
      Let $\xi_X$ and $\xi_Y$ be representations of spaces $X$ and $Y$. A \demph{realizer} of a function $f\colon X\to Y$ is a function $F\colon\B\to\B$ such that for all $\varphi\in \B$
      \[ \varphi\in \dom(\xi_X) \quad\Rightarrow\quad \xi_Y(F(\varphi)) = f(\xi_X(\varphi)). \]
    \end{definition}
    That is: $F$ translates $\xi_X$-names of $x$ into $\xi_Y$-names of $f(x)$.
    Computability of operators on Baire space can be defined using oracle Turing machines:
    An operator $F:\subseteq\B\to\B$ is called \demph{computable} if there is an oracle Turing machine $M^?$ such that the run of $M^?$ on input $\str a$ and with oracle $\varphi\in\dom(F)$ halts with output $M^\varphi(\str a) = F(\varphi)(\str a)$.
    For more details about the exact model of oracle machines to use we point to \cite{MR2743298}.

    A function $f:X\to Y$ between spaces with distinguished representations is called \demph{computable} if it has a computable realizer.

    Finally, this paper needs the product construction.
    Recall that a pairing $\langle\cdot,\cdot\rangle$ of string functions was fixed in the introduction.
    \begin{definition}
      Let $\xi_X$ and $\xi_Y$ be representations of spaces $X$ and $Y$.
      Define a representation $\xi_{X\times Y}$ of the Cartesian product $X\times Y$ as follows:
      A string function $\varphi$ is a name of an element $(x,y)\in X\times Y$ if and only if there exist string functions $\psi\in\xi_X^{-1}(x)$ and $\psi'\in\xi_Y^{-1}(y)$ such that $\varphi=\langle \psi,\psi'\rangle$.
    \end{definition}
    Recall that an element of a represented spaces is called computable resp.\ polynomial time computable if it has such a name.
    It is true that an element $(x,y)$ of the product is computable resp.\ polynomial-time computable if and only if both $x$ and $y$ are computable resp.\ polynomial-time computable.

    \begin{example}
      For a given representation $\xi$ of the continuous functions on the unit interval $C([0,1])$, the above definitions together with the standard representation of the reals from \Cref{ex:representation of reals} allow to discuss computability and polynomial-time computability of the operator
      \[\tag{eval}\label{eq:eval} \eval:C([0,1])\times [0,1] \to \RR, \quad (f,x)\mapsto f(x). \]
    \end{example}

  \subsection{Second-order complexity theory}

    For complexity considerations this paper uses second-order complexity theory which goes back to a definition by Mehlhorn \cite{MR0411947}.
    However, just like the framework of Kawamura and Cook does, we replace the original definition by a characterization due to Kapron and Cook \cite{MR1374053}.
    This characterization is based on resource restricted oracle Turing machines and considerably more accessible than the original definition that was based on limited recursion on notation scheme.
    Recall that $\B:=\Sigma^*\to \Sigma^*$ denotes the Baire space, i.e. the space of all string functions.
    Oracle machines compute operators on Baire space and therefore take elements of Baire space as inputs.
    When bounding the running time of such a machine, the size of the functional input should be taken into consideration.
    \begin{definition}
      For a string function $\varphi\in\B$ define its \demph{length} $\flength\varphi:\omega\to\omega$ to be the function
      \[ \flength\varphi(n):= \max\{\length{\varphi(\str a)}\mid \length {\str a}\leq n\}. \]
    \end{definition}
		That is: the length of $\varphi$ is the worst case increase in string-size from input to output.
    A running time bound $T$ should be an object of the type $\omega^\omega\times \omega\to\omega$:
    It takes a size of an oracle function, a size of an input string and returns a number of steps $T(\flength{\varphi},\length{\str a})$ the machine is allowed to take on inputs $\varphi$ and $\str a$.
    The subclass of running times that are considered polynomial, i.e.\ the \demph{second-order polynomials}, are recursively defined as follows:
    \begin{itemize}
      \item Whenever $p$ is a polynomial with natural number coefficients, then the function $(l,n)\mapsto p(n)$ is a second-order polynomial.
      \item Whenever $P$ is a second-order polynomial, the function $(l,n)\mapsto l(P(l,n))$ is also a second-order polynomial.
      \item Whenever $P$ and $Q$ are second-order polynomials, then so are their point-wise sum and product.
    \end{itemize}
    \begin{definition}
      An oracle Turing machine $M^?$ is said to run in polynomial time on $A\subseteq \B$ if there is a second-order polynomial $P$ such that on oracle $\varphi\in A$ with input $\str a$ it halts after at most $P(\flength\varphi,\length{\str a})$ computation steps.
    \end{definition}
    A functional $F:\subseteq\B\to\B$ is called \demph{polynomial-time computable} if there is an oracle Turing machine $M^?$ that runs in polynomial time on $\dom(F)$ and such that for all $\varphi\in\dom(F)$ and strings $\str a$ it holds that $M^\varphi(\str a) = F(\varphi)(\str a)$.
    A function between spaces with distinguished representations is called \demph{polynomial-time computable} if it has a polynomial time computable realizer.

    It should be pointed out, that the characterization provided by Kapron and Cook only applies to the case where additional properties of the set $A$ are known.
    The definition stated here is a proper generalization in the sense that the operators we consider polynomial-time computable need not have polynomial-time computable total extensions.
    However, this seems to be a reasonable and necessary extension.

    An important special case where one is interested in computability or complexity of an operation are comparisons of different representations a space.
    \begin{definition}
      Let $\xi$ and $\xi'$ be representations of some space $X$.
      A \demph{translation} from $\xi$ to $\xi'$ is a realizer of the identity, i.e. a mapping $F:\subseteq \B\to\B$ such that for all $\varphi\in \B$ it holds that
      \[ \varphi \in\dom(\xi) \quad\Rightarrow\quad \xi'(F(\varphi))=\xi(\varphi). \]
      The representation $\xi$ is called \demph{topologically, computably} or \demph{polynomial-time translatable to} $\xi'$ if there exists a continuous, computable or poly\-no\-mi\-al-time computable translation.
      The representations $\xi$ and $\xi'$ are called \demph{topologically, computably} or \demph{polynomial-time equivalent} if there exist continuous, computable or polynomial-time computable translations in both directions.
    \end{definition}
    In literature the corresponding relation is usually called reducibility and denoted by $\preceq$.
    This terminology is taken from the discrete setting and can sometimes be confusing in the context of representations, as intuitively \lq $\xi$ is reducible to $\xi'$\rq\ should mean that there is a reduction mapping from $\xi'$ to $\xi$.

    \begin{example}
      The different versions of the representation of the real numbers discussed in \Cref{ex:representation of reals} lead to polynomial-time equivalent representations.
      Computability of functions is preserved under change to computably equivalent representations on both the input and output spaces.
      Polynomial-time computability is preserved under change of polynomial-time equivalent representations.
      These properties follow from the closure of computable and of polynomial-time computable operators under composition.
      A proof that the later remains true in our setting can for instance be found in \cite{kawamura_et_al:LIPIcs:2017:7737}.
    \end{example}

  \subsection{Hyper-linear time}

    Due to the use of general representations, this paper imposes the following more restrictive condition than polynomial-time computability on the evaluation operator:
    \begin{definition}\label{def:hyper-linear}
      A second-order polynomial $H$ is called \demph{hyper-linear}, if there exists some integer polynomial $p$ and a constant $C\in\omega$ such that
      \[ H(l,n) \leq p(l(n+C)+n) \]
      A polynomial-time computable function between represented spaces is called \demph{computable in hyper-linear time} if it is computed by a machine whose running time is bounded by a hyper-linear second-order polynomial.
    \end{definition}
    One should keep in mind that this definition is tailored for the application at hand.
    No care about complexity theoretical well-behavedness was taken.
    Indeed, the class of hyper-linear time computable operators may change with subtle changes in the model of computation.
    To make the above definition meaningful, more details about the model of computation have to be fixed:
    From now on assume that the position of the reading head resp. writing heads on the oracle tapes do not change during oracle queries and that oracle calls take one time step.

    \begin{example}
      Consider the two operators $F$ and $G$ defined by
      \[ F(\varphi)(\str a) := \varphi(\varphi(\str a))_1\quad\text{and}\quad G(\varphi)(\str a) := \overline{\varphi(\overline{\str a})}, \]
      where $\str a_i$ is the $i$-th bit of the string and $\overline{\str a} := \str a_{\length{\str a}}\ldots \str a_1
      $ is the mirrored string.
      The straight forward oracle machines that compute these operators run in time $\bigo(n+l(n))$.
      For $F$ this is due to our convention, that only reading the oracle tape is accounted for in the time consumption of the machine:
      While the return value might be very long, writing it to the output tape is done by the oracle and copying the first bit to the output tape takes constant time.
      Thus both $F$ and $G$ are hyper-linear-time computable.
      The composition $F\circ G$ of these operators is given by
      \[ (F\circ G)(\varphi)(\str a) = G(\varphi)(G(\varphi)(\str a))_1 = \overline{\varphi(\varphi(\overline {\str a}))}_1 = \varphi(\varphi(\overline{\str a}))_{\length{\varphi(\varphi(\str a))}}, \]
      and should intuitively not be hyper-linear-time computable.

      Indeed, it is not to difficult to give a proof that $F\circ G$ is not hyper-linear time computable:
      Assume that $M^?$ is a machine that computes $(F\circ G)$ in hyper-linear time $(l,n)\mapsto p(l(n+C)+n)$.
      Construct a pair of oracles $\psi_\sdzero$ and $\psi_\sdone$ such that $M^{\psi_\sdzero}(\varepsilon)=M^{\psi_\sdone}(\varepsilon)$ but $(F\circ G)(\psi_\sdzero)(\varepsilon)\neq(F\circ G)(\psi_\sdone)(\varepsilon)$.
      Let $\psi_i$ return the empty string on all arguments but $\varepsilon$, where it returns $\sdone^{C+1}$, and on the argument $\sdone^{C+1}$, where it returns $\sdone^{p(C+1)}i$:
      \[
				\psi_i (\str a):= \begin{cases}
				\sdone^{C+1} &\text{if } \str a=\varepsilon \\
				\sdone^{p(C+1)}i &\text{if } \str a=\sdone^{C+1}. \\
				\varepsilon & \text{otherwise}
			\end{cases} \]
      To see that $M^?(\varepsilon)$ returns identical results on both $\psi_i$ note that for all $n \leq C$ it holds that $\length {\psi_i} (n) = C+1$.
			Thus, the time the machine is granted of either of the oracles $\psi_i$ and input $\varepsilon$ is $p(|\psi|(0+C)+0)=p(C+1)$
      and the run does only rely on what is written in the first $p(C+1)$ cells of the oracle answer tape at any point in the computation.
      The content of this part of the oracle answer tape is identical for all possible answers of $\psi_\sdone$ and $\psi_\sdzero$.
      Thus the runs of the machine are identical and so is the return value.
      On the other hand, obviously $(F\circ G)(\psi_\sdzero)(\varepsilon) = \sdzero \neq \sdone = (F\circ G)(\psi_\sdone)(\varepsilon)$, thus the machine does not compute $F\circ G$.

      As the machine was arbitrary it follows that $F\circ G$ is not hyper-linear time computable.
    \end{example}
    This example shows that the hyper-linear-time computable operators are not closed under composition in the model of computation that we chose.
    The class is also not stable under rather minor changes in the model of computation.
    For instance, the alternate convention of counting one time step for each digit of the return value in an oracle query is fairly common throughout second-order complexity theory and leads to the same class of polynomial-time computable operators.
    We consider it to be less natural as it leads to doubled counting of steps when composing machines and more technical difficulties overall.
   	Making sense of hyper-linear time restrictions under this changed convention of time counting has to be done very carefully:
  	Whether or not a machine is allowed to abort an oracle query matters.
    If abort is disallowed, then being hyper-linear-time computable implies a polynomial lookahead which is too restrictive for the applications this paper is interested in.
    If aborting is allowed one has to ask again how this is done: aborting with an initial segment written to the answer tape leads to the same class of hyper-linear-time computable operators we work with.
    The convention where no information about the answer is available in case of an abort leads to again a different class not containing the operator $F$ from the previous example.

    All of the above difficulties equally apply to the class of machines that have a runtime bound of the form
    \[ (l,n)\mapsto p(l(q(n))+n). \]
    The class of operators computed by a machine allowing a running time bound of this form has been discussed as the right class for capturing feasibility in computable analysis.
		This justifies looking at hyper-linear time computation regardless of the model-dependence.
  \section{A minimal representation}

    Recall that this paper simulates multivariate input and output from $\NN$ or $\DD$ by separating the different arguments by $\#\#$
    and uses the abbreviation $[r\pm\epsilon]$ for $[r-\epsilon,r+\epsilon]$.
    This chapter proves the following representation to be the minimal representation such that evaluation is hyper-linear-time computable:
     \begin{definition}\label{def:xic}
      Define \demph{the representation $\xic$ of $C([0,1])$}:
      A string function $\varphi$ is a $\xic$-name of a function $f\in C([0,1])$ if and only if both of the following hold:
      \begin{enumerate}
        \item For all $r\in\DD\cap[0,1]$ and $n\in\omega$ there are $q\in\DD$ and $m\in\omega$ such that
        \[ \varphi(2^n\#\#r) = 2^m\#\#q \quad\text{and}\quad f([r\pm 2^{-m}]\cap[0,1])\subseteq [q\pm 2^{-n}]. \]
        \item For all $r,q\in\DD\cap[0,1]$ it holds that
        \[ \varphi(2^n\#\#r) = 2^m\#\#q\quad\Rightarrow\quad m\leq \flength{\varphi}(n). \]
      \end{enumerate}
    \end{definition}
    The first condition guarantees that on input $r$ and accuracy requirement $2^{n}$, a name of a function $f$ returns a $2^{-n}$-approximation $q$ of the value $f(r)$ of the function as well as an estimate $\delta:=2^{-m}$ of how much $r$ can be varied without the approximation $q$ becoming invalid.
   The second condition implies that $\flength\varphi$ is a modulus of continuity of $\xic(\varphi)$ in the following sense:
    A function $\mu:\omega\to\omega$ is called \demph{modulus of continuity} of $f\in C([0,1])$ if it fulfills
    \[\tag{mod}\label{eq:mod} \forall x,y\in[0,1]\colon \abs{x-y}\leq 2^{\mu(n)} \Rightarrow \abs{f(x)-f(y)}\leq2^{-n}. \]
    The above is automatically fulfilled for $\mu(n):=\flength\varphi(n+1)$ and $f:=\xic(\varphi)$.
    The length of a name can be increased arbitrarily without interfering with the other condition by changing the values of the string function on strings that do not contain any $\#$.
    Using this and the fact that any continuous function on the unit interval has a uniform modulus of continuity it is quite easy to see that the above indeed defines a representation, i.e. that any continuous function has a name.

    \begin{theorem}\label{res:hyper-linear time evaluation}
      The evaluation operator
      \[ \eval:C([0,1])\times[0,1]\to \RR, \quad(f,x)\mapsto f(x) \]
      is hyper-linear-time computable with respect to $\xic$.
    \end{theorem}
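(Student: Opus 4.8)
The plan is to exhibit a single oracle machine computing a realizer of $\eval$ and then read off a hyper-linear bound on its running time. Fix an oracle of the form $\Phi=\langle\varphi,\psi\rangle$, where $\varphi$ is a $\xic$-name of $f$ and $\psi$ is a name of $x\in[0,1]$ in the sense of \Cref{ex:representation of reals}. On input $2^n$ the machine must output a dyadic $2^{-n}$-approximation of $f(x)$, i.e.\ produce one value of a name of $f(x)$.

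The evaluation algorithm is an adaptive search over the precision at which $x$ is read, guided by a self-consistency test. For $k=0,1,2,\dots$ the machine performs the following stage. It queries $\psi$ (through the prefix $\sdone$) at $2^k$ and reads only the initial segment of the answer needed to obtain a dyadic $2^{-k}$-approximation $r_k$ of $x$, which it clamps into $\DD\cap[0,1]$; since $x\in[0,1]$ this clamping preserves $\abs{r_k-x}\le 2^{-k}$. It then queries $\varphi$ (through the prefix $\sdzero$) at $2^{n+1}\#\# r_k$ and parses the answer as $2^{m_k}\#\# q_k$, but reads the leading unary block only up to depth $\bigo(k)$, which suffices to decide whether $m_k\le k$. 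If $m_k>k$ the machine discards the stage and increments $k$; if $m_k\le k$ it reads a $2^{-(n+1)}$-approximation $q_k'$ of $q_k$ (again only an initial segment of the code) and outputs it.

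Correctness and termination follow from the two defining conditions of $\xic$. When the test $m_k\le k$ succeeds we have $\abs{r_k-x}\le 2^{-k}\le 2^{-m_k}$, hence $x\in[r_k\pm 2^{-m_k}]\cap[0,1]$, and the first condition gives $f(x)\in[q_k\pm 2^{-(n+1)}]$; truncating $q_k$ to precision $2^{-(n+1)}$ then yields $\abs{q_k'-f(x)}\le 2^{-n}$, so $q_k'$ is a valid output. For termination note that the second condition bounds $m_k\le\flength\varphi(n+1)$ for every $k$ (all queried $r_k$ lie in $\DD\cap[0,1]$), so the test is guaranteed to succeed at the latest at stage $k=\flength\varphi(n+1)$, giving at most $\flength\varphi(n+1)+1\le\flength\Phi(n+2)+1$ stages.

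It remains to bound the running time, and this is where the restrictiveness of hyper-linearity bites. Each stage is designed so that, apart from the final one, only $\bigo(n+k)$ tape cells are read or written: the integer part of $r_k$ has bounded length because $x\in[0,1]$, so reading a $2^{-k}$-approximation costs $\bigo(k)$, writing the query to $\varphi$ costs $\bigo(n+k)$, and the consistency test inspects only $\bigo(k)$ cells of the unary block. Summing over the stages the search contributes $\bigo(\flength\Phi(n+2)^2+\flength\Phi(n+2)\,n)$ steps. The one genuinely delicate quantity is the cost of emitting the final answer $q_k'$, whose integer part has length roughly $\log\norm{f}_{\infty}$ and is charged to the machine: a priori this is not controlled by $\flength\Phi$ at a constant argument, and reading the full $\varphi$-answer would introduce a forbidden iteration of the length function. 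The key observation is that the modulus property forced by the second condition also controls the magnitude of $f$: crossing $[0,1]$ in steps of size $2^{-\flength\varphi(1)}$, each changing $f$ by at most $1$, yields $\norm{f}_{\infty}\le\abs{f(0)}+2^{\flength\varphi(1)}$, while $\log\abs{f(0)}$ is read off a single constant-length query to $\varphi$. Hence $\log\norm{f}_{\infty}\le\flength\varphi(\bigo(1))+\bigo(1)\le\flength\Phi(C_0)$ for a suitable constant $C_0$, so the final output costs only $\bigo(\flength\Phi(C_0)+n)$. Collecting the contributions and setting $C:=\max\{2,C_0\}$ bounds the total running time by $p(\flength\Phi(n+C)+n)$ with $p(t)=t^2+t+1$, which is hyper-linear. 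I expect the magnitude bound on $f$ to be the main obstacle, as it is exactly the point where one must convert the modulus information hidden in the second condition into a constant-argument bound and thereby avoid nesting the length function.
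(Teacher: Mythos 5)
Your proposal is correct and follows essentially the same route as the paper's proof: the same adaptive loop with the consistency test $m_k\le k$, termination via the second condition of \Cref{def:xic} bounding $m_k$ by $\flength\varphi$ at a fixed argument, and crucially the same resolution of the one delicate point, namely bounding the integer part of the output by combining a single constant-length query to $\varphi$ with the modulus of continuity $\mu(n)=\flength\varphi(n+1)$ to get $\norm{f}_\infty\le\abs{f(\cdot)}+2^{\flength\varphi(1)}$. The only differences are cosmetic (querying at precision $n+1$ and truncating rather than returning $q_k$ directly, anchoring at $0$ rather than $\tfrac12$), and your explicit clamping of $r_k$ into $[0,1]$ is a detail the paper glosses over.
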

    \begin{proof}
      A machine computing the evaluation operator can be described as follows:
      When given a pair $\langle\varphi,\psi\rangle$ of a $\xic$-name $\varphi$ of a function $f\in C([0,1])$ and a name $\psi$ of a real number $x\in[0,1]$ and an precision requirement $2^{n}$ as input, the machine carries out the following loop for increasing $i$:
      First it obtains an encoding of a dyadic $2^{-i}$-approximation $x_i$ of $x$ by evaluating $\psi(2^{i})$.
      Then it evaluates $\varphi(2^n\#\#x_i)$ to obtain an encoding of a dyadic number $q_i$ and an integer $m_i$ such that $f([x_i\pm2^{m_i}])\subseteq [q_i\pm2^n]$.
      It checks if $m_i\leq i$.
      If this is not the case, it increases $i$ and restarts the loop.
      If it is the case it exits the loop and returns $q_i$.

      It should be clear that if the machine exits the loop at some point, then the return value is a valid approximation to $f(x)$.
      Therefore, it remains to prove that the machine always terminates and runs in polynomial time.
      Note that by the second condition of the definition of the representation $\xic$, the length of the name is a modulus of continuity.
      Claim that whenever $i\geq \flength\varphi(n)$, then the machine exits the loop.
      Indeed, in this case by the second condition of the definition of the representation $\xic$, it holds that $m_i \leq \flength\varphi(n)\leq i$.
      Thus, the loop is carried out at most $\flength\varphi(n)$ times.

      As the number $i$ is smaller than $\flength\varphi(n)$, going through the loop once takes hyper-linear time:
      The loop also needs to copy $2^n$, which takes $\bigo(n)$ steps.
      To see that copying the second argument $q_i$ of $\varphi(2^n\#\#x_i)$ is possible within the specified time bound, it is necessary to extract a bound on the integer part of $q_i$.
      This can be done as follows: The string $\sdzero\sdzero\#\sdone$ encodes the dyadic number $\frac12$.
      Thus, by the first condition of the definition of $\xic$ it holds that $\varphi(\sdone\#\#\sdzero\sdzero\#\sdone)=2^{m}\#\#q$ and $q$ and $m$ fulfill
      \[ f([1/2\pm2^{-m}]) \subseteq [q\pm1]. \]
      In addition to this, $\mu(n):=\flength\varphi(n+1)$ is a modulus of continuity of $f$ and by dividing the distance to any $x\in[0,1]$ to $\frac12$ into $2^{\flength\varphi(1)-1}$ steps of size less than $2^{-\flength\varphi(1)}$ it follows that
      \[ f([0,1])\subseteq [q\pm(1+2^{\flength\varphi(1)-1})]. \]
      This finally implies that the integer part of the second argument of the return value of $\varphi(2^n\#\#r)$ is smaller than $2^{\flength\varphi(1)}+2^{\flength\varphi(7)}$, where the second term is a bound on the integer part of $q$ that follows from how $q$ was found.
      Since $\flength\varphi(1)\leq\flength\varphi(7)$, such integers have codes that are of length less than $\flength\varphi(7)+3$.

      Therefore, the loop can be carried out in $\bigo(\max\{\flength\varphi(7),\flength\varphi(n),n\})\subseteq \bigo(n+\flength\varphi(n+7))$ steps and all of the computation takes less than $\bigo((n+\flength\varphi(n+7))^2)$.
      This time bound is hyper-linear.
    \end{proof}

  \subsection{A minimality property}
    With respect to the representation $\xic$ it is possible to evaluate in polynomial time.
    To prove that the representation is minimal with this property we need to provide a fast translation to $\xic$ for any other representation of the continuous functions on the unit interval that allows fast evaluation.
    \begin{theorem}\label{resu:minimality of xic}
      Let $\xi$ be a representation of $C([0,1])$.
      If the operator
      \[ \eval:C([0,1]) \times [0,1] \to \RR, \quad (f,x)\mapsto f(x) \]
      is hyper-linear-time computable with respect to $\xi$, then there exists a hyper-linear-time translation from $\xi$ to $\xic$.
    \end{theorem}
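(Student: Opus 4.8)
The plan is to turn the hypothesised hyper-linear evaluation machine into a device that, on a name $\varphi$ of $f$ and a query $2^n\#\#r$, both computes an approximation of $f(r)$ and reads off a modulus from how deeply the evaluation consulted its real-number argument. Fix, by \Cref{def:hyper-linear}, a machine $M^?$ computing $\eval$ with respect to $\xi$ in time bounded by $H(l,n)\le p(l(n+C)+n)$. Given $\varphi\in\dom(\xi)$ and an input $2^n\#\#r$ with $r\in\DD\cap[0,1]$, I would form the \emph{truncating} dyadic name $\chi_r$ of $r$, defined by letting $\chi_r(2^j)$ be the length-$(j+\bigo(1))$ truncation of $r$ to precision $2^{-j}$, and then simulate $M^{\langle\varphi,\chi_r\rangle}(2^n)$, answering every query to the $\chi_r$-part myself from the input $r$ and passing every query to the $\varphi$-part on to the oracle. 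Let $q$ be the value produced and let $k$ be the largest precision $2^{j}$ at which the simulation consulted $\chi_r$. The translated name $\psi$ then outputs $2^{k}\#\#q$ on $2^n\#\#r$, together with a padding prescription on $\#$-free inputs described below.

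Correctness of the first condition of \Cref{def:xic} is the conceptual heart of the argument, and I expect it to go through cleanly. Since the simulation never reads $\chi_r$ at precision finer than $2^{-k}$, the very same computation is produced by every real $x$ with $\abs{x-r}\le 2^{-k}$: for such $x$ the function $\chi_r$ is itself a legitimate name, because each consulted value $\chi_r(2^j)$ with $j\le k$ is a $2^{-j}$-approximation of $x$ as well, so by the defining property of a realizer the value $q=M^{\langle\varphi,\chi_r\rangle}(2^n)$ is a $2^{-n}$-approximation of $f(x)$. Hence $f([r\pm 2^{-k}]\cap[0,1])\subseteq[q\pm 2^{-n}]$, which is exactly the first condition with $m:=k$.

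The running-time analysis is where the truncating name earns its keep. Writing the full $r$ into $\chi_r$ would make $\flength{\chi_r}$ as large as $\length r$ and destroy hyper-linearity, whereas the truncation gives $\flength{\chi_r}(j)=\bigo(j)$ \emph{uniformly in} $r$. Consequently $\flength{\langle\varphi,\chi_r\rangle}(n+C)\le\flength\varphi(n+C)+\bigo(n)$, and the simulation runs in time polynomial in $\flength\varphi(n+C)+n$; since $n$ is at most the length of the input $2^n\#\#r$, this is hyper-linear by \Cref{def:hyper-linear}. The bookkeeping that answers $\chi_r$-queries by truncating $r$ adds only a polynomial overhead, which hyper-linearity absorbs.

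The main obstacle is the second condition of \Cref{def:xic}, namely $k\le\flength\psi(n)$. The difficulty is a mismatch of indices: the modulus value $k=k(n,r,\varphi)$ is emitted on the input $2^n\#\#r$, whose length exceeds $n$, while $\flength\psi(n)$ records only outputs on inputs of length at most $n$. To repair this I would use the freedom, noted after \Cref{def:xic}, to inflate a name on $\#$-free strings without disturbing the first condition. The uniform time bound above yields $k(n,r,\varphi)\le R(n):=p(\flength\varphi(n+C)+\bigo(n))$, a bound independent of $r$, so it would suffice to guarantee $\flength\psi(n)\ge R(n)$ by defining $\psi$ on a suitable $\#$-free input of length at most $n$ to emit a sufficiently long string. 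The delicate point — and the part I expect to require the most care — is that $R(n)$ cannot be evaluated by inspecting $\varphi$ on all short strings within the time budget; instead the padding must be driven by a single worst-case simulation of $M$ on accuracy $2^n$ under an adversarial real-number oracle, with the argument that the resulting query depth dominates $\sup_r k(n,r,\varphi)$ while the whole simulation stays hyper-linear. This is precisely where the restrictive shape of \Cref{def:hyper-linear} and the model conventions fixed earlier (one step per oracle call, heads fixed during a query) are indispensable, and where the sensitivity of the class to the model of computation makes the estimate fragile.
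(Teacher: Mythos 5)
Your overall strategy --- simulate the hyper-linear evaluation machine on $2^n$, answer its real-number queries with truncations of $r$, return the produced approximation together with the deepest precision $k$ consulted, and pad on $\#$-free inputs to satisfy the second condition of \Cref{def:xic} --- is exactly the paper's, and your treatment of the first condition and of the running time matches the paper's proof. The gap is in the one step you yourself flag as delicate: how the padding actually forces $\flength{\psi}(n)\geq R(n)=p(\flength{\varphi}(n+C)+\bigo(n))$. Your proposed mechanism, a \lq single worst-case simulation of $M$ under an adversarial real-number oracle whose query depth dominates $\sup_r k(n,r,\varphi)$\rq, does not work: the quantity to be dominated is controlled by $\flength{\varphi}(n+C)$, which is a supremum over all (exponentially many) strings of length at most $n+C$ and cannot be observed by any single polynomial-time run of $M$, adversarial or not; moreover different $r$ produce genuinely different runs, and there is no canonical run whose query depth bounds all the others.

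The paper's resolution is different, and it is precisely where the constant lookahead $C$ of \Cref{def:hyper-linear} is indispensable. One does not try to compute $\flength{\varphi}(n+C)$ on any single input; one distributes the work over all inputs of length $n$. On an input $\str a$ of length $n$ that is not of the form $2^n\#\#r$, the translation queries $\varphi$ on all $3^C$ strings $\str c\str a$ with $\length{\str c}=C$ (constantly many queries, hence affordable), takes the maximum length $m$ of the answers, and outputs the string $\sdone^{p(m+n)}$. Since every string $\str b$ of length $n+C$ witnessing $\flength{\varphi}(n+C)$ arises as $\str c\str a$ for some length-$n$ suffix $\str a$ (adjusted so as not to collide with well-formed inputs), the supremum defining $\flength{\psi}(n)$ is at least $p(\flength{\varphi}(n+C)+n)$, which in turn dominates every emitted modulus $k$ because $k$ is bounded by the running time of the simulated evaluation. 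Without this device, or a working replacement for it, the second condition of \Cref{def:xic} is not established and the proof is incomplete at its central point.
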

    \begin{proof}
      Assume the evaluation operator is computable in hyper-linear time.
      To build a machine that translates $\xi$ into $\xic$ proceed as follows:
      Given input of the form $2^n\#\#r$ (i.e. input for a $\xic$-name such that the first condition of \Cref{def:xic} applies) and a $\xi$-name $\varphi$ as oracle, execute a modified version of the source code of the evaluation operator on $2^n$:
      Note that the evaluation operator expects to be handed a pair $\langle\psi,\psi'\rangle$ of a $\xi$-name for the function and a name of a real number~$x$.
      Thus, whenever there is a leading $\sdzero$ on the query tape and a query command is issued, the machine first removes the leading $\sdzero$, and then queries the oracle.
      Whenever there is a leading $\sdone$ on the query tape, the oracle query command in the code of the evaluation are replaced with a code snippet that notes the maximum precision that was asked to the memory tape and then copies an appropriate initial segment of the encoding of the rational number $r$ to the oracle answer band.
      This produces an encoding of a dyadic number $q$ on the output tape.
      Finally the machine adds $2^m\#\#$ in front of the encoding, where $m$ is the highest precision that was required of the oracle for the real number and terminates.

      This produces a valid output of a $\xic$-name of $f$ on $2^n\#\#r$:
      The output is valid, as any $x\in[r\pm2^{-m}]$ has a name that returns the exact same initial segments of $r$ on queries less than $2^m$.
      The run of the evaluation operator on this oracle is identical to the run simulated above.
      Thus the return value is a valid approximation to $f(x)$ for each of these $x$.
      I.e. $f([r\pm2^{-m}])\subseteq [q\pm2^{-n}]$.

      To guarantee that the second condition from \Cref{def:xic} holds, recall that the evaluation operator being hyper-linear-time computable means that there is an integer polynomial $p$ and a natural number $C$ such that the run of the machine computing $\eval$ with oracle $\varphi$ on input $\str a$ takes at most $p(\flength\varphi(n+C)+n)$ steps.
      Let the machine proceed on inputs $\str a$ that are not of the form $2^n\#\#r$ as follows:
      For any of the $3^C$ strings $\str c$ of length $C$ it queries the oracle $\varphi$ on $\str c\str a$ and $\str c\str a'$, where $\str a'$ is the string where the first symbol after the first $\#$ is replaced by a $\#$ (and $\str a=\str a'$ if there is no $\#$ or the only one is the last symbol).
      It takes the maximum $m$ of the lengths of the oracle answers and returns the string consisting only of $\sdone$s and of length $p(m+n)$.

      The above guarantees that the string function produced by the machine has length bigger than $p(\flength\varphi(n+C)+n)$:
      Let $\str b$ be a string of length $n+C$ such that $\length{\varphi(\str b)} =\flength\varphi(n+C)$.
      Let $\str a$ be the last $n$ bits of $\str b$ where in the first occurrence of $\#\#$ the second $\#$ is replaced by $\sdzero$.
      Then the machine described above carries out the previous paragraph on input $\str a$.
      By the procedure described there it is guaranteed that the query $\str b$ is posed to the oracle and that the return value is longer than $p(\length{\varphi(\str b)}+n) = p(\flength\varphi(n+C)+n)$.

      The final thing to verify is that the second condition of the \Cref{def:xic} of $\xic$ is fulfilled by the function produced by the above procedure:
      Let $\psi$ be the string function produced by the machine above.
      By the previous it is clear that $\flength{\psi}(n)\geq p(\flength\varphi(n+C)+n)$.
      Since $(l,n)\mapsto p(l(n+C)+n)$ is a running time of the evaluation operator, which is simulated on an oracle of length $\flength{\varphi}$ and input $2^n$, it is clear that the number $m$ produced in the second paragraph of the proof is smaller than $p(\flength\varphi(n+C)+n)$ and therefore also as $\flength\psi(n)$.
    \end{proof}

    It should be noted, that the failure of closure under composition of hyper-linear-time computable operators has consequences for the applicability of the theorem.
    For instance, one would expect that the existence of a fast translation to the representation $\xic$ should imply that there exists an algorithm for fast evaluation.
    To obtain an algorithm for evaluation one has to first translate to $\xic$ and then use the algorithm for evaluation over $\xic$.
    As the class of hyper-liner time algorithms is not closed under composition, the algorithm obtained in this way need not run in hyper-linear time.
    It does run in polynomial time though.

   \subsection{Comparison to second-order representations}
    This chapter presents a hardness result for an operation with respect to the representation $\xic$:
    It is impossible to compute a modulus of continuity of a function in polynomial time with respect to $\xic$.
    This restriction is welcome as it seems to reflect the behavior of functions in \iRRAM.
    It should be noted that this result does not use the stronger notion of being \lq fast\rq\ that was previously used in this paper but really proves failure of polynomial-time computability.

    Computing a modulus of continuity is an inherently multivalued operation.
    Recall that a multivalued mapping $f:X\mto Y$ is an assignment of elements of $x$ to non-empty sets $f(x)\subseteq Y$.
    The elements of $f(x)$ are interpreted as the \lq acceptable return values\rq.
    \Cref{def:realizer} of a realizer can straight-forwardly be extended to apply to multivalued mappings and thus it makes sense to talk about computability and complexity of multivalued mappings.
    \begin{theorem}\label{resu:the modulus function}
      The modulus function
      \[ \mod:C([0,1])\mto \omega^\omega, f\mapsto \{\mu\mid \mu\text{ is mod. of cont. of } f\text{ (see \cref{eq:mod})}\} \]
      is not polynomial-time computable with respect to $\xic$.
    \end{theorem}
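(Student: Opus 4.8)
The plan is to assume, for contradiction, that an oracle machine $M^{?}$ computes $\mod$ with respect to $\xic$ within a second‑order polynomial bound $P$, and to defeat it by hiding a single narrow, low‑amplitude spike. I would work with the function $f$ that is $0$ except for a tent of half‑width $2^{-W}$ and height $2^{-A}$ centred at a dyadic $c$; its steepest slope is $2^{W-A}$, so at the accuracy level $n:=A+1$ the spike's height $2^{-A}=2\cdot 2^{-n}$ just exceeds the tolerance $2^{-n}$, and resolving $f$ there needs radius exponent $\approx W+1$. Consequently every valid modulus of $f$ must return at level $n$ a value of magnitude at least $W+1$ (matching $\flength{\varphi}(n+1)$ from the remark after \Cref{def:xic}), whereas for $j\le A$ the spike stays inside the tolerance $2^{-j}$. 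The goal is to build, for the given $M$, a genuine $\xic$‑name of such an $f$ on which $M$ is forced to answer as on the zero function and hence outputs a value far below $W+1$.

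First I would run $M^{?}(2^{n})$ against an adaptively constructed oracle that always reports $q=0$: a query $2^{j}\#\# r$ with $j\le A$ is answered with radius exponent $m=0$ (legitimate, since the spike height lies within $2^{-j}$), and a query with $j>A$ is answered with a radius exponent $m_{j}=\bigo(\log n)$ fixed by a short fixed‑point estimate on $\flength{\varphi}$. Since this simulated oracle is a valid name of the zero function with $\flength{\varphi}=\bigo(\log n)$, the run halts after $P(\flength{\varphi},n)=\mathrm{poly}(n)$ steps, reading a set $Q$ of at most $\mathrm{poly}(n)$ strings and producing an output $v$ whose length is therefore also $\le \mathrm{poly}(n)$. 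The intervals $[r\pm 2^{-m_{j}}]$ from the finitely many high‑accuracy queries have total length below $\tfrac12$, leaving a gap in $[0,1]$ of width $\Omega(1/\mathrm{poly}(n))$, which dwarfs $2^{-W}$.

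Second I would place the spike inside that gap, so that $c$ is avoided by every interval committed to during the simulation, and complete the oracle to a full $\xic$‑name $\varphi^{*}$ of the resulting $f$: near $c$ at accuracy $j>A$ it returns the honest exponent $m\approx j+W-A$, and condition~2 of \Cref{def:xic} is satisfied by padding a $\#$‑free string of length $\le j$ chosen outside $Q$, forcing $\flength{\varphi^{*}}(j)\ge j+W-A$. One checks $\varphi^{*}$ agrees with the simulated oracle on all of $Q$: the low‑accuracy answers are valid because the spike is within tolerance, and the high‑accuracy answers are valid because the spike sits in the gap. As $M$ is deterministic and $\varphi^{*}$ matches the simulation on $Q$, the run $M^{\varphi^{*}}(2^{n})$ replays it verbatim and halts with the same output $v$; choosing $W>\mathrm{poly}(n)\ge v$ gives $v<W+1$, so $v$ is not a valid $\mu(n)$ for $f$, contradicting that $M$ computes $\mod$.

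The main obstacle I anticipate is precisely the interplay between condition~2 and the running‑time bound. Condition~2 forces $\flength{\varphi^{*}}$ to be as large as the true modulus at the accuracies where the spike is resolved, and since $P$ is a second‑order polynomial the nominal budget $P(\flength{\varphi^{*}},n)$ swells to $\mathrm{poly}(W)$, seemingly giving $M$ time to locate the spike. The decisive point is that this budget is only an upper bound: on $\varphi^{*}$ the machine still halts at the step dictated by the flat simulation, so its output length is governed by $\flength{\varphi}$ of the flat oracle (hence $\mathrm{poly}(n)$), not by $\flength{\varphi^{*}}$. Making this rigorous needs (i) the small‑amplitude device, so that the coarse commitments made before the spike is inserted survive the insertion of a very steep feature, and (ii) a careful check that the forced padding and the spike can both be hidden off $Q$ while keeping the flat simulation's length, and therefore $v$, polynomially bounded.
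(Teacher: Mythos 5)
Your proposal is correct and follows essentially the same adversary argument as the paper: simulate the machine on a flat name of the zero function, use the polynomial query bound to find an unqueried region, hide a narrow steep spike there, satisfy condition~2 of \Cref{def:xic} by padding on unqueried $\#$-free strings, and conclude by determinism that the output is polynomially bounded while every true modulus value is not. The only differences (locating the gap by total committed measure rather than by counting subintervals, and decoupling the spike's width from its amplitude) are inessential variations of the paper's construction.
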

    \begin{proof}
      Towards a contradiction assume that there was a machine that computes a modulus of continuity in polynomial time.
      That is: There is a second-order polynomial $P$ such that the machine, when given a $\xic$-name $\varphi$ of a function $f$ and an input $2^n$ produces $2^{\mu(n)}$ on the output tape within $P(\length{\varphi},n)$ steps and the function $\mu$ is a modulus of continuity of $f$.
      Consider the following name $\psi$ of the constant zero function:
      \[ \psi(\str a) := \begin{cases} 2^{n}\#\#\sdzero\sdzero\# &\text{if }\str a= 2^{n+1}\#\# r\text{ for some }r\in\DD\cap[0,1] \\ \varepsilon & \text{otherwise.}\end{cases} \]
      Obviously $\flength{\psi}(n)=n+1$.
      The function $p(n):=P(\cdot+1,n)$ is a polynomial and bounds the number of steps until the machine returns some value $m$ of $\mu(n)$.
      Choose some $N$ such that $3p(N) < 2^N$.
      Consider the run of the machine on input $2^N$.
      Think of $[0,1]$ as the union of $2^N$ closed intervals of equal length $2^{-N}$.
      Since the $2^{-N-1}$ neighborhood of a rational number can at most intersect three such intervals, and the machine can at most ask $p(N)$ queries, at least one closed interval $I$ is such that no rational number in its $2^{-N-1}$ neighborhood is queried.
      Let $f'$ be the function that is zero everywhere but in $I$, where it takes the value $\frac32 2^{-N}$ in the middle and then goes linearly to zero with slope $3\cdot2^{\max\{\mu(N)-N,0\}}$.
      Note that any modulus of continuity of $f'$ at $N$ is strictly larger than $\max\{\mu(N),N\}$.

      To change the name $\psi$ of the zero function to a name $\psi'$ of $f'$ without changing any of the values the machine looked at during the computation, first note that due to the choice of the interval $I$ each query the machine makes is either a query with a precision such that zero is a valid approximation to the value of $f'$ or the name only returns information about the values on an interval disjoint from $I$.
      Therefore, it is possible to change the values of $\psi$ at strings the machine does not query to obtain a string function $\tilde \psi$ that fulfills the first condition of being a name of $f'$.
      Where the values the machine has not asked for can be chosen to be the exact values of $f'$ and the intervals can be chosen optimal.

      Furthermore, there are at least $2^M$ strings of length $M$ that do not represent any pair of a natural number and a dyadic number, for instance the binary strings.
      Thus, for any $M\geq N$ there is at least one such string $\str a_M$ the machine does not query.
      To obtain a valid name $\psi'$ of $f'$ change the values of $\tilde\psi$ on the string $\str a_M$ to have length according to a modulus of continuity of $f'$.

      As the machine behaves deterministically, and $\psi'$ and $\psi$ coincide on the values that are asked in the run with oracle $\psi$ and input $N$, the run of the machine on input $N$ with oracle $\psi'$ is identical and returns $\mu(N)$.
      However, by construction, $\mu(N)$ is not a value of any modulus of continuity of $f'$ in $N$.
      Therefore, no polynomial-time machine computing a modulus function exists.
    \end{proof}

    Kawamura and Cook introduced a framework for complexity considerations in analysis.
    For a well-behaved second-order complexity theory they impose an additional condition on the names:
    \begin{definition}[\cite{Kawamura:2012:CTO:2189778.2189780}]
      A string function $\varphi\in\B$ is called \demph{length-mo\-no\-to\-ne} if for all strings $\str a$ and $\str b$ it holds that
      \[ \length{\str a}\leq \length{\str b} \quad\Rightarrow\quad \length{\varphi(\str a)}\leq \length{\varphi(\str b)}. \]
      The set of all length-monotone string functions is denoted by $\reg$.
    \end{definition}
    The condition they impose is that any name in a representation is length-monotone.
    To distinguish their representations from the ones used in this paper we use their original terminology.
    \begin{definition}[\cite{Kawamura:2012:CTO:2189778.2189780}]
      A representation is a \demph{second-order representation} if its domain is contained in $\reg$.
    \end{definition}
    In this special case it is irrelevant whether time constraints are imposed on all of Baire-space or only for oracles from $\reg$.
    This may be attributed to the existence of a polynomial-time computable retraction from the Baire space to $\reg$ \cite{kawamura_et_al:LIPIcs:2017:7737} or verified directly.
     In particular, we may stick with the definition of polynomial-time computability used in the rest of this paper.
    \begin{definition}[\cite{Kawamura:2012:CTO:2189778.2189780}]
      Define a second-order representation $\delta_\square$ of $C([0,1])$ as follows: A length-monotone string function $\varphi$ is a name of a function $f\in C([0,1])$ if $\varphi=\langle\psi,\psi'\rangle$ for string functions $\psi$ and $\psi'$ that fulfill both of the following:
      \begin{enumerate}
        \item $n\mapsto \length{\psi(2^n)}$ is a modulus of continuity of $f$.
        \item for any encoding $r$ of a dyadic number in $[0,1]$ and $n\in\omega$ it holds that $\psi'(2^n\#\# r)$ is an encoding of a dyadic number $q$ and
        \[ \abs{f(r)-q}\leq 2^{-n}. \]
      \end{enumerate}
    \end{definition}
    A polynomial-time translation of $\delta_\square$ to $\xic$ is readily written down.
    The modulus function as defined in \Cref{resu:the modulus function} is obviously polynomial-time computable with respect to $\delta_\square$.
    With respect to $\xic$ the modulus function is not polynomial-time computable as proven in \Cref{resu:the modulus function}.
    Therefore, the representations $\delta_\square$ and $\xic$ are not polynomial-time equivalent.
    \begin{corollary}\label{resu:comparing representations}
      $\xic$ can not be translated to $\delta_\square$ in polynomial time.
    \end{corollary}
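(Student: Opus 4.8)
The plan is to argue by contradiction, exploiting the fact that polynomial-time computable realizers compose. Suppose there were a polynomial-time translation $T$ from $\xic$ to $\delta_\square$, i.e.\ a polynomial-time computable $T\colon\subseteq\B\to\B$ sending every $\xic$-name of a function $f$ to some $\delta_\square$-name of the same $f$. As already remarked above, the modulus function is polynomial-time computable with respect to $\delta_\square$: reading off the length $n\mapsto\length{\psi(2^n)}$ of the first component of a $\delta_\square$-name $\langle\psi,\psi'\rangle$ yields a modulus of continuity of $f$, and this extraction is plainly polynomial-time computable. Let $M$ denote a polynomial-time realizer of this operation.

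First I would form the composite realizer $M\circ T$. Since $T$ maps each $\xic$-name of $f$ into $\dom(\delta_\square)$, namely to a $\delta_\square$-name of the same $f$, and $M$ then returns a modulus of continuity of $f$, the composite $M\circ T$ is a realizer of the modulus function with $\xic$ on the input side and the standard representation of $\omega^\omega$ on the output side. By the closure of polynomial-time computable operators under composition (discussed in the preliminaries and justified in the present generalized setting in \cite{kawamura_et_al:LIPIcs:2017:7737}), the operator $M\circ T$ is again polynomial-time computable.

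This exhibits a polynomial-time computation of the modulus function with respect to $\xic$, directly contradicting \Cref{resu:the modulus function}. Hence no polynomial-time translation from $\xic$ to $\delta_\square$ can exist, which is the claim.

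The only point requiring genuine care---and therefore the main obstacle---is the invocation of closure under composition in our broadened notion of polynomial-time computability, where realizers need not admit polynomial-time computable total extensions. One must confirm that $\dom(\xic)\subseteq\dom(M\circ T)$ and that substituting the second-order polynomial bounding $T$ into the one bounding $M$ again produces a second-order polynomial; both are routine given the cited closure result. It is worth emphasising that this composition is of honestly polynomial-time operators, so the failure of closure under composition for the stricter hyper-linear-time class noted earlier in the paper is irrelevant to the present argument.
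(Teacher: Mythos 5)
Your argument is correct and is exactly the paper's: the text preceding the corollary observes that the modulus function is polynomial-time computable with respect to $\delta_\square$ but not with respect to $\xic$ (by \Cref{resu:the modulus function}), so a polynomial-time translation would, by closure of polynomial-time computable operators under composition, yield a contradiction. Your additional remarks on domains and on why hyper-linear-time non-closure is irrelevant are accurate but not needed beyond what the paper already states.
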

    Kawamura and Cook succeeded to prove the following:
    \begin{theorem}[Lemma 4.9 in \cite{Kawamura:2012:CTO:2189778.2189780}]
      For a second-order representation $\delta$ of $C([0,1])$ the following are equivalent
      \begin{itemize}
        \item The evaluation operator from \cref{eq:eval} is polynomial-time computable.
        \item $\delta$ is polynomial-time translatable to $\delta_\square$.
      \end{itemize}
    \end{theorem}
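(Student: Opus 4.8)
The plan is to prove both implications, one of which is routine while the other carries the real work. For the direction ``$\delta$ is poly-time translatable to $\delta_\square$ $\Rightarrow$ $\eval$ is poly-time w.r.t.\ $\delta$'', I would first note that $\eval$ is poly-time computable with respect to $\delta_\square$ directly from the definition: a $\delta_\square$-name $\langle\psi,\psi'\rangle$ already carries a modulus of continuity as $n\mapsto\length{\psi(2^n)}$ together with dyadic values of $f$ via $\psi'$. To produce a $2^{-n}$-approximation of $f(x)$ one reads off the modulus at level $n+1$, approximates $x$ by a dyadic $r\in[0,1]$ to the corresponding precision from the real-name, and queries $\psi'(2^{n+1}\#\#r)$; this uses a constant number of oracle queries and poly-time arithmetic. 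Composing the hypothesised poly-time translation $\delta\to\delta_\square$ with this algorithm and invoking closure of poly-time operators under composition (\cite{kawamura_et_al:LIPIcs:2017:7737}) yields poly-time evaluation with respect to $\delta$.

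The converse direction is the substantial part and proceeds as an analogue of \Cref{resu:minimality of xic}, but now in the second-order polynomial regime. Given a poly-time evaluation machine with second-order-polynomial bound $P$ and a $\delta$-name $\Phi$ of $f$, I would construct a $\delta_\square$-name $\langle\psi,\psi'\rangle$. The approximation part $\psi'$ is built exactly as in \Cref{resu:minimality of xic}: on input $2^n\#\#r$ run the evaluation machine on the oracle $\langle\Phi,\chi_r\rangle$, forwarding queries with leading $\sdzero$ to $\Phi$ and answering queries with leading $\sdone$ internally by emitting initial segments of the dyadic $r$. The output is a $2^{-n}$-approximation of $f(r)$, produced in poly-time.

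The modulus part $\psi$ is the \emph{main obstacle}. The key observation is a lookahead bound: running $\eval$ on $\langle\Phi,\chi_x\rangle$ with input $2^n$ in time at most $P(\flength{\langle\Phi,\chi_x\rangle},n)$ can query the real-name $\chi_x$ only at precisions $2^{-j}$ with $j$ below the running time, since writing the query $\sdone 2^{j}$ already costs $\ge j$ steps; as $\flength{\chi_x}(m)=\bigo(m)$ for the canonical name, this running time is bounded by a second-order polynomial $\tilde P(\flength\Phi,n)$ depending only on $\flength\Phi$ and $n$. Hence two points within $2^{-(\tilde P(\flength\Phi,n+1)+1)}$ of each other admit real-names returning identical answers on all queried precisions and therefore receive the same approximation, so $\mu(n):=\tilde P(\flength\Phi,n+1)+\bigo(1)$ is a modulus of continuity of $f$. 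It remains to realise $\mu$ as the output length of $\psi$ in poly-time, and this is exactly where \emph{length-monotonicity of $\Phi$ is essential}: $\tilde P$ is built from integer polynomials and nested applications of the length function, and for length-monotone $\Phi$ one has $\flength\Phi(m)=\length{\Phi(\sdone^m)}$, so each application of $\flength\Phi$ costs a single oracle query rather than an infeasible enumeration of all $3^m$ strings of length $m$. Recursing on the structure of $\tilde P$ thus evaluates $\mu(n)$ within a second-order-polynomial time bound, and I let $\psi(2^n)$ be a string of that many $\sdone$s.

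Finally I would arrange that the constructed name lies in $\reg$, as required for a second-order representation. Since enlarging the values of $\psi$ only sharpens the modulus condition, the outputs of $\psi$ may be freely lengthened, while the dyadic outputs of $\psi'$ may be padded with trailing zeros without changing their value; choosing these paddings so that the output length of $\langle\psi,\psi'\rangle$ is a monotone function of the input length (or, alternatively, post-composing with the poly-time retraction onto $\reg$ of \cite{kawamura_et_al:LIPIcs:2017:7737}) yields a length-monotone $\delta_\square$-name of $f$ computed in poly-time from $\Phi$. This exhibits the desired translation, completing the equivalence.
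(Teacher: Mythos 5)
The paper does not prove this statement at all --- it is quoted verbatim from Kawamura and Cook (their Lemma~4.9) and used as a black box to derive \Cref{resu:no second-order representation} --- so there is no in-paper proof to compare against. Your reconstruction is nonetheless the standard argument and I see no gap in it. The easy direction is composition of the hypothesised translation with the evident polynomial-time evaluation over $\delta_\square$. For the hard direction, your $\psi'$ is exactly the construction the paper itself reuses in \Cref{resu:minimality of xic} (simulate the evaluation machine, answering the real-argument queries internally from initial segments of $r$), and your modulus extraction isolates the right two ingredients: the running time bounds the lookahead into the name of $x$, so a second-order polynomial in $\flength\Phi$ is a modulus of continuity; and length-monotonicity makes $\flength\Phi(m)=\length{\Phi(\sdone^m)}$ available by a single query, so that this second-order polynomial can actually be evaluated and emitted as $\length{\psi(2^n)}$. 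That last step is precisely the one the paper's negative results (\Cref{resu:the modulus function} and \Cref{resu:polynomial-time and length}) show cannot be carried out without length-monotonicity, so you have correctly located where the hypothesis is used. The only points worth tightening in a full write-up are the precision bookkeeping (run the evaluation at accuracy $n+1$ and use that two points within $2^{-T-1}$ of one another admit names agreeing on all precisions up to the lookahead bound $T$, giving $\abs{f(x)-f(y)}\leq 2^{-n}$) and the final padding of $\langle\psi,\psi'\rangle$ into $\reg$, both of which you already flag.
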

    Since the hyper-linear-time computability implies polynomial-time computability this entails the following:
    \begin{corollary}\label{resu:no second-order representation}
      $\xic$ is not polynomial-time equivalent to any second-order representation.
    \end{corollary}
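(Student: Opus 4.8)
The plan is to argue by contradiction, reducing to the two results that immediately precede the corollary. Suppose, towards a contradiction, that there were a second-order representation $\delta$ of $C([0,1])$ that is polynomial-time equivalent to $\xic$. By definition of equivalence this supplies polynomial-time translations in both directions, i.e.\ a polynomial-time translation $\xic\to\delta$ and a polynomial-time translation $\delta\to\xic$. The point of the argument is that $\xic$ itself is \emph{not} a second-order representation (its domain is not contained in $\reg$), so Kawamura and Cook's characterization cannot be applied to it directly; instead I would route the whole argument through the hypothetical $\delta$.

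The first step is to transport the fast evaluation algorithm onto $\delta$. By \Cref{res:hyper-linear time evaluation} evaluation is hyper-linear-time, and in particular polynomial-time, computable with respect to $\xic$. Precomposing the evaluation realizer for $\xic$ with the polynomial-time translation $\delta\to\xic$ yields a realizer for evaluation with respect to $\delta$, which is polynomial-time computable by closure of polynomial-time operators under composition (see \cite{kawamura_et_al:LIPIcs:2017:7737}). Since $\delta$ is a second-order representation, the preceding Lemma~4.9 of \cite{Kawamura:2012:CTO:2189778.2189780} now applies and gives a polynomial-time translation $\delta\to\delta_\square$. Composing the assumed forward translation $\xic\to\delta$ with this translation $\delta\to\delta_\square$, again using closure under composition, produces a polynomial-time translation $\xic\to\delta_\square$. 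This directly contradicts \Cref{resu:comparing representations}, so no such $\delta$ exists.

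The argument is short, and the only delicate point is the repeated appeal to closure under composition. The paper has just stressed that the hyper-linear-time computable operators are \emph{not} closed under composition, so it is essential that every chaining step above be carried out at the level of ordinary polynomial-time computability rather than hyper-linear time. Concretely, I would make the passage from the hyper-linear time bound of \Cref{res:hyper-linear time evaluation} to a plain second-order polynomial bound explicit at the outset, and only afterwards compose the realizers and translations; once everything lives in the polynomial-time class, the two compositions are unproblematic and the contradiction with \Cref{resu:comparing representations} closes the proof.
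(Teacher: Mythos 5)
Your argument is correct and is exactly the chain of reasoning the paper leaves implicit when it states the corollary: transport polynomial-time evaluation from $\xic$ to the hypothetical second-order representation $\delta$ via the translation $\delta\to\xic$, invoke Kawamura and Cook's Lemma~4.9 to obtain a polynomial-time translation $\delta\to\delta_\square$, and compose with $\xic\to\delta$ to contradict \Cref{resu:comparing representations}. Your added remark that all compositions must be performed in the ordinary polynomial-time class (where closure under composition holds), not the hyper-linear one, is a worthwhile clarification but does not change the route.
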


  \subsection{Composition}

    This final chapter presents a major flaw of the representation $\xi_C$:
    It does not render the composition of functions polynomial-time computable.
    This makes it improbable that the representation $\xi_C$ is of value in applications.
    We believe that its study is of value nonetheless as its properties closely reflect well-known quirks of second-order complexity theory.
    It therefore outlines what can and cannot be done in real complexity theory when relying on second-order complexity theory.
    We like to believe that it provides evidence that one should either stick with the framework of Kawamura and Cook or go beyond the scope of second-order complexity theory.

    As a preparation note that an easy counting argument proves the following:
    \begin{theorem}\label{resu:polynomial-time and length}
      There does not exist any polynomial-time computable operator $F:\B\to\B$ such that
      \[ \forall\varphi\in\B.\forall n\in\omega\colon\length{F(\varphi)}(n) \geq \length{\varphi}(2n) \]
    \end{theorem}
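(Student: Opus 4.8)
The plan is to combine the trivial bound ``output length $\le$ running time'' with a counting argument over the oracle queries. First I would record the only structural fact about the hypothetical $F$ that I need: if $M^?$ computes $F$ in time bounded by the second-order polynomial $P$, then for every $\varphi$ and every string $\str a$ the machine writes $M^\varphi(\str a)=F(\varphi)(\str a)$ within $P(\flength\varphi,\length{\str a})$ steps, and since at most one output symbol is produced per step, $\length{F(\varphi)(\str a)}\le P(\flength\varphi,\length{\str a})$. Taking the maximum over $\length{\str a}\le n$ yields $\flength{F(\varphi)}(n)\le P(\flength\varphi,n)$. Hence it suffices to exhibit, for the fixed $P$, a single $\varphi\in\B$ and a single $n$ with $\flength\varphi(2n)>P(\flength\varphi,n)$; this contradicts the assumed inequality $\flength{F(\varphi)}(n)\ge\flength\varphi(2n)$.

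Next I would run $M^?$ against the trivial oracle $\varphi_0$ that returns $\varepsilon$ on every argument. Then $\flength{\varphi_0}$ is the zero function, so $P(\flength{\varphi_0},\cdot)$ collapses to an ordinary integer polynomial $p$, and for every input $\str a$ with $\length{\str a}\le n$ the run $M^{\varphi_0}(\str a)$ halts within $p(n)$ steps and therefore issues at most $p(n)$ oracle queries. There are fewer than $3^{n+1}$ inputs of length at most $n$, so across all of them at most $3^{n+1}p(n)$ distinct strings are ever queried. Since there are $3^{2n}$ strings of length exactly $2n$ and $3^{2n}>3^{n+1}p(n)$ holds for all sufficiently large $n$, I would fix such an $n$ and pick a string $\str a_0$ of length $2n$ that is queried by none of these runs.

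Then I would plant a long value at $\str a_0$: define $\varphi$ to agree with $\varphi_0$ everywhere except $\varphi(\str a_0):=\sdone^{\,p(n)+1}$. By construction $\flength\varphi(2n)=p(n)+1$, while $\flength\varphi$ agrees with the zero function below $2n$. The heart of the argument is then an indistinguishability claim: because $\varphi$ and $\varphi_0$ differ only at $\str a_0$, and $M^{\varphi_0}(\str a)$ never queries $\str a_0$, the deterministic runs $M^{\varphi}(\str a)$ and $M^{\varphi_0}(\str a)$ are step-for-step identical for every $\str a$ of length at most $n$; in particular $M^\varphi(\str a)$ never queries $\str a_0$ and produces the same output, of length at most $p(n)$. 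Consequently $\flength{F(\varphi)}(n)\le p(n)<p(n)+1=\flength\varphi(2n)$, which is the desired contradiction.

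The step I expect to require the most care is the indistinguishability claim, precisely because passing from $\varphi_0$ to $\varphi$ raises $\flength\varphi$ and therefore inflates the \emph{allowed} running time $P(\flength\varphi,n)$. I would emphasise that the machine's transitions depend only on the oracle answers it actually reads, not on the budget that $P$ grants it; since the two oracles return identical answers on every query the run can possibly make (none of which is $\str a_0$), the larger time budget is simply never used, and the output length stays bounded by $p(n)$. The counting inequality $3^{2n}>3^{n+1}p(n)$ is exactly where the gap between the argument length $2n$ and the input length $n$ is spent, and it is what makes the argument a genuine counting argument rather than a direct length comparison.
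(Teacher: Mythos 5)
Your proof is correct and follows essentially the same route as the paper's: run the machine on the everywhere-empty oracle, count oracle queries over all inputs of length at most $n$ to find an unqueried string of length $2n$, plant a value there that is longer than the unchanged outputs, and conclude by determinism of the machine. One small remark: the reduction announced in your first paragraph --- exhibiting $\varphi,n$ with $\flength\varphi(2n)>P(\flength\varphi,n)$ --- is not achievable for every second-order polynomial (take $P(l,n)=l(2n)$) and is not what you actually use; your executed argument instead correctly bounds $\flength{F(\varphi)}(n)$ by $p(n)=P(\flength{\varphi_0},n)$ via the step-for-step identical runs, which is exactly the paper's argument.
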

    \begin{proof}
      Assume $M^?$ was a machine that computes an operator $F$ with the above property in time bounded by some second-order polynomial $P$.
      Consider the constant string function $\varphi(\str a)\equiv\varepsilon$.
      The length of this function is the constant zero function, thus $p(n):=P(\length{\varphi},n)$ is a polynomial.
      Since $P$ is a running time of $M^?$, the computation of $M^\varphi(\str  a)$ takes at most $p(\length{\str a})$ many steps for any input string $\str a$.
      Choose $N$ big enough such that $p(N)<2^N$.
      Note that there are $2^{2N}$ strings of length $2N$.
      The number of oracle queries $M^?$ asks for at least one input $\str a$ of length $N$ is bounded by $p(N)2^N<2^{2N}$.
      Thus, there exists at least one string $\str b$ of length $2N$ that is not queried during the computation of $M^\varphi(\str a)$ for any string $\str a$ of length less than $N$.
      Let $\psi$ be the function such that $\psi(\str b)=\sdzero^{\length{M^\varphi}(N)+1}$ and returns the empty string on all other values.
      The machine $M^?$ is deterministic and does not query $\str b$.
      Therefore it returns the same values with oracles $\varphi$ and $\psi$ and any input of length less or equal $N$.
      It follows that
      \[ \length{\psi}(2N)\geq \length{\psi(\str b)}=\length{M^\varphi}(N)+1 = \length{M^{\psi}}(N)+1 = \length{F(\psi)}(N)+1. \]
      This contradicts that the operator $F$ computed by $M^?$ has the desired property.
    \end{proof}
    This is in contrast to the situation in classical complexity theory, where for any polynomial $p\in\NN[X]$ there exists a polynomial-time computable function $\varphi$ such that $\length{\varphi(\str a)}\geq p(\length{\str a})$ for all input strings.
    The above proves that the straight forward translation of this statement to second-order complexity theory fails for the simplest second-order polynomials that are not hyper-linear.
    That the statement still holds true if the second-order polynomial is hyper-linear is what was made it possible to provide the minimality result for the representation $\xic$ from \Cref{resu:minimality of xic}.

    Also note that this theorem implies that there is no polynomial-time computable functional $F$ such that
    \[ \forall \varphi,\psi\in\B. \forall n\in\omega\colon \length{F(\langle \varphi, \psi\rangle)}(n) \geq (\length{\varphi}\circ\length{\psi})(n). \]
    As such an operator would provide an operator as in the theorem by fixing $\psi$ to be the function $\psi(\str a):=\str a\str a$.
    From this perspective it is not surprising that composition with respect to $\xic$ is not polynomial-time computable:
    Just like the failure of polynomial-time computability from \Cref{resu:the modulus function} lifted that the length function is not polynomial-time computable, the above can be lifted to infeasibility of composition.

    Let $C([0,1],[0,1])$ denote the set of all continuous functions whose image is contained in the unit interval.
    We consider this space a subspace of $C([0,1])$ and equip it with the range restriction of the representation $\xi_C$.
    The composition operator is defined as follows:
    \[ \circ\colon C([0,1])\times C([0,1],[0,1])\to C([0,1]),\quad (f,g)\mapsto f\circ g, \]
    where $(f\circ g)(x) := f(g(x))$.

    \begin{theorem}[Composition]\label{resu:composition}
      The composition operator is not polynomial-time computable with respect to the representation $\xi_C$.
    \end{theorem}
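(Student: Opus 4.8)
The plan is to lift \Cref{resu:polynomial-time and length} to composition, exactly as that theorem lifts the non-computability of the length function to \Cref{resu:the modulus function}. The mechanism rests on two observations. First, a valid $\xic$-name must front-load its length: if $\varphi$ is a $\xic$-name of $h$, then combining the two conditions of \Cref{def:xic} shows (as already noted after that definition) that $n\mapsto\flength\varphi(n+1)$ is a modulus of continuity of $h$, hence dominates the optimal one $\mu_h$, so $\flength\varphi(n)\geq\mu_h(n-1)$; in particular, by condition~2, a name of a function with a steep feature is forced to emit a long answer already on some \emph{short} query. Second, fix the outer function to be $f(x)=\sqrt x$, which is Hölder-$\tfrac12$ with optimal modulus $n\mapsto 2n$. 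A narrow tent in $g$ of width $2^{-W}$ and height $2^{-h}$ becomes, under $f\circ g=\sqrt g$, a tent of the same width but height $2^{-h/2}$; hence $\mu_{\sqrt g}$ already exceeds its trivial value at precision $\approx h/2$, whereas $\mu_g$ itself does so only at precision $\approx h$. This factor-two gap between the precision at which the \emph{output} is obliged to be long and the precision up to which the \emph{input} may remain short is exactly the ``$2n$'' appearing in \Cref{resu:polynomial-time and length}, and it is what a polynomial-time machine cannot bridge.

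Concretely, I would assume for contradiction that a machine $M$ computes composition in time bounded by a second-order polynomial $P$, fix a polynomial-time name $\varphi_f$ of $\sqrt{\cdot}$ (so $\flength{\varphi_f}(n)=\bigo(n)$), and study $F(\psi):=M(\langle\varphi_f,\psi\rangle)$, whose running time on input $\str a$ is $P(\flength{\langle\varphi_f,\psi\rangle},\length{\str a})$, still a second-order polynomial in $\flength\psi$. For parameters $h<W$ to be fixed later, let $g=g_{h,W}\in C([0,1],[0,1])$ be zero except for a tent of width $2^{-W}$ and height $2^{-h}$ centred at a point $\rho$ chosen at distance more than $2^{-W}$ from every dyadic of bit-length below $W$, and let $\psi$ be the $\xic$-name of $g$ that reports on each query a smallest admissible precision. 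The crucial length profile is that $\psi$ can be forced to emit a long answer only at a query that both asks precision finer than $2^{-h}$ and pins a dyadic to within $2^{-W}$ of $\rho$; pinning $\rho$ already costs at least $W$ input bits, so $\flength\psi(i)=\bigo(i)$ for all $i<W$.

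I would then choose $W$ larger than $P$ evaluated on this short part. Since the look-aheads of $P$ are polynomial and $\flength\psi(i)=\bigo(i)$ for $i<W$, the nested evaluations started at argument $h-1$ never climb to $W$, so the running time of $F$ on inputs of length at most $h-1$ is $\mathrm{poly}(h)$; I fix $W>\mathrm{poly}(h)$. Because each output symbol costs a step, every value $F(\psi)(\str a)$ with $\length{\str a}\leq h-1$ then has length $\mathrm{poly}(h)$, whence $\flength{F(\psi)}(h-1)\leq\mathrm{poly}(h)<W$. On the other hand, if $M$ is correct then $F(\psi)$ is a valid $\xic$-name of $\sqrt g$, so by the first observation $\flength{F(\psi)}(h-1)\geq\mu_{\sqrt g}(h-2)$, and the tent computation gives $\mu_{\sqrt g}(h-2)\geq W$ for $h>4$, since the composite tent of height $2^{-h/2}$ binds already at precision $h-2>h/2$. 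This contradiction finishes the proof.

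The main obstacle is the bookkeeping that makes the reduction honest: I must exhibit a genuine valid $\xic$-name $\psi$ of $g_{h,W}$ whose length stays $\bigo(i)$ for $i<W$ yet which, composed with $\sqrt{\cdot}$, obliges the output name to be long already at precision $h-1$; and I must verify that the second-order running time, evaluated on such a deliberately front-light oracle, remains polynomial in $h$ even though $\flength\psi$ eventually reaches $\approx W$. Quantifying the optimal moduli of the tent and of its square root with the correct shifts, and checking that \emph{no} valid name of $g_{h,W}$ can be shorter than claimed throughout the binding range, is where the real work lies; it is precisely the point at which the length obstruction of \Cref{resu:polynomial-time and length} is transported through the $\tfrac12$-Hölder outer function into a failure of composition.
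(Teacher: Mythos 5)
Your high-level idea is the paper's: fix an outer function with optimal modulus $n\mapsto 2n$ (the paper uses a piecewise-linear surrogate for $\sqrt{\cdot}$ with $f(\tfrac34 2^{-2i})=2^{-i}$), give the inner function a narrow spike, and turn the length obstruction of \Cref{resu:polynomial-time and length} into a failure of composition. But the construction of the hard instance has a fatal gap: the ``front-light'' name $\psi$ of the tent $g_{h,W}$ that you need does not exist. Condition~2 of \Cref{def:xic} bounds the returned radius exponent $m$ by $\flength\psi(n)$ where $n$ is the \emph{precision} argument alone, not the length of the whole query $2^n\#\#r$. Since condition~1 quantifies over \emph{all} dyadic $r$, a valid name must, on a query with precision $2^{-(h+2)}$ and $r$ near the apex, return $m\geq W+1$ (the tent has slope $2^{W-h}$, so shrinking its oscillation below $2^{-h-1}$ forces radius $2^{-m}\leq 2^{-W-1}$); hence $\flength\psi(h+2)\geq W$ for every valid name of $g_{h,W}$. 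Your claim $\flength\psi(i)=\bigo(i)$ for $i<W$ is therefore false, the time budget $P(\flength\psi,h-1)$ already reaches $W$ via a single look-ahead of constant depth, and no contradiction arises. (A smaller, also unfixable, point: a $\rho$ at distance more than $2^{-W}$ from every dyadic of bit-length below $W$ does not exist, since the dyadics with $W-1$ fractional bits are $2^{-(W-1)}$-dense.)

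The missing ingredient is the paper's adversary argument, which is precisely what circumvents this: one starts from the genuinely short name of the \emph{zero} function, runs the machine on all inputs of length at most $N$, uses a counting argument ($3p(N)<2^{2N}$) to find an interval $I$ whose neighborhood is never queried, and only \emph{then} plants the spike inside $I$, modifying the oracle on unqueried strings to obtain a valid (and necessarily long) name $\psi'$ of the spiked function. Because the machine is deterministic and never sees the difference, its run --- and hence its output and its step count $p(N)$ --- is the one computed against the short oracle, while correctness against $\psi'$ would require output length exceeding $\max\{p(N),N\}$ at argument about $N$. Exhibiting the hard inner function in advance, as you do, cannot work for $\xic$: any honest name of a spiky function is forced to be long at small arguments, and the second-order time bound then grants the machine enough time.
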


    \begin{proof}
      Towards a contradiction, assume that there exists a machine $M^?$ that runs in time bounded by a second-order polynomial $P$ and that when given a pair $\langle\varphi,\psi\rangle$ of $\xic$-names of functions $f\colon [0,1]\to \RR$ and $g\colon [0,1]\to[0,1]$ computes a $\xic$-name of $f\circ g$.

      Let $f$ be the following function:

      \noindent
      \begin{minipage}{.5\textwidth}
      \[ f(x) := \sum_{i=0}^\infty 2^{-i}\max\left\{1-\abs{2^{2i+2} x-3},0\right\}. \]
      Since $f$ is polynomial-time computable, it has a name $\varphi$ of polynomial length.
      Note that $f(0)=0$ and $f(\frac 34 2^{-2i}) = 2^{-i}$, in particular $f$ has no modulus smaller than $m\mapsto 2m$.
      \vspace{.12cm}
      \end{minipage}
      \hfill
      \begin{minipage}{.325\textwidth}
        \vspace{-.4cm}
        \begin{tikzpicture}
          \draw[->] (0,-.1) -- (0,3.2);
          \draw[->] (-.1,0) -- (4.2,0);
          \draw (-.1,3) -- (.1,3);
          \draw (4,-.1) -- (4,.1);
          \draw (2,-.1) -- (2,.1);
          \draw (1,-.1) -- (1,.1);
          \draw (.5,-.1) -- (.5,.1);
          \draw[blue] (4,0) -- (3,3) -- (2,0)--(1,0) -- (.75,1.5) -- (.5,0) -- (.25,0) -- (.125,0) -- (.09375,.75) -- (.0625,0) -- (0.03125,0) -- (0.0234375,0.3625) -- (0.015625,0) -- (0.0078125,0) -- (0.0078125,0.18125) ;
        \end{tikzpicture}
      \end{minipage}

      Consider the following name $\psi$ of the constant zero function $g$:
      \[ \psi(\str a) := \begin{cases} 2^{n}\#\#\sdzero\sdzero\# &\text{if }\str a= 2^{n+1}\#\# r\text{ for some }r\in\DD\cap[0,1] \\ \varepsilon & \text{otherwise.}\end{cases} \]
      Obviously $\flength{\psi}(n)=n+1$.
      The function $p(n):=P(\length{\varphi}+\length{\psi}+1,n)$ is a polynomial and bounds the number of steps until the machine returns some value.
      Choose some $N$ such that $3p(N) < 2^{N}$.

      Think of $[0,1]$ as the union of $2^{2N}$ closed intervals of equal length $2^{-2N}$.
      Since the $2^{2N-1}$ neighborhood of a rational number can at most intersect three such intervals, and the machine can at most ask $p(N)$ queries on each input $\str a$ of length $N$, there is at least one interval $I$ such that no query is asked in the $2^{2N-1}$ neighborhood of $I$.
      Let $g'$ be the function that is zero everywhere but in $I$, where it takes the value $\frac34 2^{-2N}$ in the middle and then goes linearly to zero with slope $3\cdot2^{\max\{p(N)-N,0\}}$.
      The argument that there is a valid name $\psi'$ of $g'$ such that the machine cannot distinguish it from $\psi$ can be copied from the proof of \Cref{resu:the modulus function}.

      Note that any modulus of continuity of $f\circ g'$ at $N$ is strictly larger than $\max\{p(N),N\}$ and that the runs of the machine on input $\str a$ of length less than $N$ are identical when the oracle $\langle\varphi,\psi\rangle$ is replaced by $\langle\varphi,\psi'\rangle$.
      Thus, the machine may not take more than $p(N)$ steps and can not produce a function whose length is a modulus of continuity of $f\circ g'$.

      This is a contradiction and thus no machine that computes the composition operator in polynomial time exists.
    \end{proof}

  \section{Conclusion}

    The representation $\xic$ was invented in an attempt to model the behavior of \iRRAM within the framework of second-order complexity theory.
    There is empirical evidence that within \iRRAM function evaluation is fast but computing a modulus of continuity is slow.
    The representation $\xic$ reflects this:
    It renders evaluation polynomial-time computable but does not allow to extract a modulus of continuity in polynomial time.
    It is remarkable that it is possible to do this within the framework of second-order complexity theory as previous results seemed to indicate that this is not possible.
    These very results forced us to leave the familiar setting of the framework for operators in analysis provided by Kawamura and Cook.

    However, the correspondence between $\xic$ and \iRRAM is imperfect:
    The running time of the straight forward algorithm for computing a modulus of continuity in \iRRAM is still way worse than that with respect to the representation $\xic$: Due to the possibility to brute force the length function, there is a cut of in the running time for functions with fast growing moduli that does not have an analogue in \iRRAM.
    It is improbable that this can be fully overcome as fast evaluation seems to necessitate the length to be comparable to a modulus of continuity.
    Furthermore, the representation $\xic$ has an undesirable property that is not reflected in the behavior of \iRRAM:
    Composition of functions is not polynomial-time computable with respect to $\xic$.

    In the proof of the hyper-linear-time computability of the evaluation operator with respect to $\xic$ in \Cref{res:hyper-linear time evaluation} the precision in each try is increased by one.
    This may lead to many useless queries.
    One could instead use the precision that the name requires the input approximation to have as next precision.
    However, this may lead to unnecessary high precision.
    Both approaches lead to comparable worst case complexities.
    The later, however, seems to be empirically superior as it is the approach that \iRRAM takes.

    \Cref{def:hyper-linear} of hyper-linear time could be slightly relaxed:
    The construction in \Cref{resu:minimality of xic} still works if the constant $C$ depends polynomially on the logarithm of $n$.
    If $C$ were allowed to depend on $n$ polynomially, the class would coincide with a class that some authors argue should be used to define polynomial-time computability anyway \cite{rett}.
    However, with respect to the convention of time consumption of oracle machines used in this paper, this bigger class is still not closed under composition.
    Furthermore, the technique used in \Cref{resu:minimality of xic} to prove the minimality of $\xic$ does not generalize.
    We think that it is unlikely that the proof can be recovered and believe that an argument similar to the one from the proof of the failure of the polynomial-time computability of the length function in \Cref{resu:polynomial-time and length} can be used to prove this.
    We did not attempt to carry this thought out as the rest of the paper is not concerned with this notion of polynomial-time computability.

  \bibliography{bib}{}

\begin{thebibliography}{KMRZ15}

\bibitem[FGH14]{MR3239272}
Hugo F{\'e}r{\'e}e, Walid Gomaa, and Mathieu Hoyrup.
\newblock Analytical properties of resource-bounded real functionals.
\newblock {\em J. Complexity}, 30(5):647--671, 2014.
\newblock \href {http://dx.doi.org/10.1016/j.jco.2014.02.008}
  {\path{doi:10.1016/j.jco.2014.02.008}}.

\bibitem[FZ15]{postive}
Hugo Férée and Martin Ziegler.
\newblock On the computational complexity of positive linear functionals on
  {C}[0;1], 2015.
\newblock MACIS conference.
\newblock URL: \url{https://hugo.feree.fr/macis2015.pdf}.

\bibitem[Grz55]{MR0086756}
A.~Grzegorczyk.
\newblock Computable functionals.
\newblock {\em Fund. Math.}, 42:168--202, 1955.

\bibitem[Kaw11]{kawamuraphd}
Akitoshi Kawamura.
\newblock {\em Computational Complexity in Analysis and Geometry}.
\newblock PhD thesis, University of Toronto, 2011.

\bibitem[KC96]{MR1374053}
B.~M. Kapron and S.~A. Cook.
\newblock A new characterization of type-{$2$} feasibility.
\newblock {\em SIAM J. Comput.}, 25(1):117--132, 1996.
\newblock \href {http://dx.doi.org/10.1137/S0097539794263452}
  {\path{doi:10.1137/S0097539794263452}}.

\bibitem[KC10]{MR2743298}
Akitoshi Kawamura and Stephen Cook.
\newblock Complexity theory for operators in analysis.
\newblock In {\em S{TOC}'10---{P}roceedings of the 2010 {ACM} {I}nternational
  {S}ymposium on {T}heory of {C}omputing}, pages 495--502. ACM, New York, 2010.

\bibitem[KC12]{Kawamura:2012:CTO:2189778.2189780}
Akitoshi Kawamura and Stephen Cook.
\newblock Complexity theory for operators in analysis.
\newblock {\em ACM Trans. Comput. Theory}, 4(2):5:1--5:24, May 2012.
\newblock \href {http://dx.doi.org/10.1145/2189778.2189780}
  {\path{doi:10.1145/2189778.2189780}}.

\bibitem[KMRZ15]{MR3377508}
Akitoshi Kawamura, Norbert M{\"u}ller, Carsten R{\"o}snick, and Martin Ziegler.
\newblock Computational benefit of smoothness: {P}arameterized bit-complexity
  of numerical operators on analytic functions and {G}evrey's hierarchy.
\newblock {\em J. Complexity}, 31(5):689--714, 2015.
\newblock \href {http://dx.doi.org/10.1016/j.jco.2015.05.001}
  {\path{doi:10.1016/j.jco.2015.05.001}}.

\bibitem[Ko91]{MR1137517}
Ker-I Ko.
\newblock {\em Complexity theory of real functions}.
\newblock Progress in Theoretical Computer Science. Birkh\"auser Boston, Inc.,
  Boston, MA, 1991.
\newblock \href {http://dx.doi.org/10.1007/978-1-4684-6802-1}
  {\path{doi:10.1007/978-1-4684-6802-1}}.

\bibitem[KO14]{MR3259646}
Akitoshi Kawamura and Hiroyuki Ota.
\newblock Small complexity classes for computable analysis.
\newblock In {\em Mathematical foundations of computer science 2014. {P}art
  {II}}, volume 8635 of {\em Lecture Notes in Comput. Sci.}, pages 432--444.
  Springer, Heidelberg, 2014.
\newblock \href {http://dx.doi.org/10.1007/978-3-662-44465-8_37}
  {\path{doi:10.1007/978-3-662-44465-8_37}}.

\bibitem[KP14]{MR3219039}
Akitoshi Kawamura and Arno Pauly.
\newblock Function spaces for second-order polynomial time.
\newblock In {\em Language, life, limits}, volume 8493 of {\em Lecture Notes in
  Comput. Sci.}, pages 245--254. Springer, Cham, 2014.
\newblock \href {http://dx.doi.org/10.1007/978-3-319-08019-2_25}
  {\path{doi:10.1007/978-3-319-08019-2_25}}.

\bibitem[KS17]{kawamura_et_al:LIPIcs:2017:7737}
Akitoshi Kawamura and Florian Steinberg.
\newblock {Polynomial Running Times for Polynomial-Time Oracle Machines}.
\newblock In Dale Miller, editor, {\em 2nd International Conference on Formal
  Structures for Computation and Deduction (FSCD 2017)}, volume~84 of {\em
  Leibniz International Proceedings in Informatics (LIPIcs)}, pages
  23:1--23:18, Dagstuhl, Germany, 2017. Schloss Dagstuhl--Leibniz-Zentrum fuer
  Informatik.
\newblock \href {http://dx.doi.org/10.4230/LIPIcs.FSCD.2017.23}
  {\path{doi:10.4230/LIPIcs.FSCD.2017.23}}.

\bibitem[Meh76]{MR0411947}
Kurt Mehlhorn.
\newblock Polynomial and abstract subrecursive classes.
\newblock {\em J. Comput. System Sci.}, 12(2):147--178, 1976.
\newblock Sixth Annual ACM Symposium on the Theory of Computing (Seattle,
  Wash., 1974).

\bibitem[M{\"u}l]{iRRAM}
Norbert~Th. M{\"u}ller.
\newblock {iRRAM}: Exact real arithmetic in {C++}.
\newblock \url{http://irram.uni-trier.de/},
  \url{https://github.com/norbert-mueller/iRRAM}.
\newblock [Online; accessed 29-March-2017].

\bibitem[M{\"u}l01]{Mu00}
Norbert~Th. M{\"u}ller.
\newblock The {iRRAM}: Exact arithmetic in {C++}.
\newblock {\em Lecture notes in computer science}, 2991:222--252, 2001.

\bibitem[Ret13]{rett}
Robert Rettinger.
\newblock Computational complexity in analysis.
\newblock 2013.
\newblock extended abstract; CCA conference.
\newblock URL:
  \url{https://www.fernuni-hagen.de/imperia/md/content/fakultaetfuermathematikundinformatik/ak/complexityanalysis.pdf}.

\bibitem[Ste17]{lmcs:3924}
Florian Steinberg.
\newblock {Complexity theory for spaces of integrable functions}.
\newblock {\em {Logical Methods in Computer Science}}, {Volume 13, Issue 3},
  Sep 2017.
\newblock \href {http://dx.doi.org/10.23638/LMCS-13(3:21)2017}
  {\path{doi:10.23638/LMCS-13(3:21)2017}}.

\bibitem[Tur36]{turing1936computable}
Alan~Mathison Turing.
\newblock On computable numbers, with an application to the
  entscheidungsproblem.
\newblock {\em J. of Math}, 58(345-363):5, 1936.

\bibitem[Wei00]{MR1795407}
Klaus Weihrauch.
\newblock {\em Computable analysis}.
\newblock Texts in Theoretical Computer Science. An EATCS Series.
  Springer-Verlag, Berlin, 2000.
\newblock An introduction.
\newblock \href {http://dx.doi.org/10.1007/978-3-642-56999-9}
  {\path{doi:10.1007/978-3-642-56999-9}}.

\end{thebibliography}

\end{document}